\newtheorem{theorem}{Theorem}[section]
\newtheorem{thm}{Theorem}[section]
\numberwithin{thm}{section}
\newtheorem{remark}[thm]{Remark}
\newtheorem{definition}[thm]{Definition}
\newtheorem{lemma}[thm]{Lemma}
\newenvironment{proof}{\noindent\\ \noindent\relax{\sc
     Proof}}{{\samepage\par\nopagebreak\hbox
     to\hsize{\hfill$\Box$}}}
\newcommand{\be}{\begin{equation}} \newcommand{\ee}{\end{equation}}
\newcommand{\bd}{\begin{displaymath}} \newcommand{\ed}{\end{displaymath}}
\newcommand{\ba}{\begin{align}} \newcommand{\ea}{\end{align}}
\newcommand{\baa}{\begin{align*}} \newcommand{\eaa}{\end{align*}}
\newcommand{\ben}{\begin{enumerate}} \newcommand{\een}{\end{enumerate}}
\newcommand{\bi}{\begin{itemize}} \newcommand{\ei}{\end{itemize}}
\newcommand{\ud}{\mathrm{d}}
\newcommand{\E}[1]{\operatorname{E}\left[ #1 \right]}
\algnewcommand\algAnd{\textbf{and}~}
\algnewcommand\algOr{\textbf{or}~}
\algnewcommand\And{\textbf{and}}
\begin{document}


\title{}
\author{Krzysztof Bartoszek} 

\maketitle

\begin{abstract}
We consider a hybrid method to simulate the return time to the initial state
in a critical--case birth--death process. The expected value of this return time
is infinite, but its distribution asymptotically follows a power--law. Hence,
the simulation approach is to directly simulate the process, unless the simulated time exceeds
some threshold and if it does, draw the return time from the tail of the power law.
\end{abstract}

Keywords : 
birth--death process, infinite mean, phylogenetic tree, power--law distribution, return time

\section{Introduction: a model for phylogenetic trees}\label{secIntro}
Birth--and--death processes are frequently used today to model various branching 
phenomena, e.g. phylogenetic trees. Empirically observed (or rather estimated from e.g.
genetic data) phylogenies can exhibit multiple patterns, e.g. many co--occurring 
species or only a  dominating one at a given time instance. The HIV
phylogeny is an example of the former, while the influenza phylogeny 
of the latter. During a given season there is one main flu virus going around,
but it may change between seasons. 

In \cite{TLigRSch2009}  a very similar model that (depending on the choice
of a parameter $\lambda$) can generate both patterns was proposed. They model the amount
of types alive at a given time instance as follows. Assume that at time $t$ there are
$N(t)>0$ types present. Then, at time $t$ the birth rate of the types is $\lambda N(t)$ and
the death rate is $N(t)$. Each type has at birth, independently of the other types, a
fitness value attached to it. If a death event occurs, the type with the lowest fitness goes extinct.
The type with the largest value of the fitness is called the dominating type.
As only the ranking of the types' fitness is relevant the distribution of 
the fitness values does not matter. 

The  main result of \cite{TLigRSch2009}
is the characterization of the lifetime of the dominating type.
\begin{theorem}[Thm. $1$ in \cite{TLigRSch2009}]
Take $\delta \in (0,1)$. If $\lambda \le 1$, then
$$
\lim\limits_{t\to \infty} P(dominating~types~at~\delta t~and~t~are~the~same)=\delta,
$$
while if $\lambda>1$, then this limit is $0$.\label{thm1}
\end{theorem}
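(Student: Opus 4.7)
The plan is to reduce the event to a ratio of cumulative birth counts and then analyze the three regimes $\lambda>1$, $\lambda<1$, $\lambda=1$ separately. First, I would identify the dominating type. Under the non--extinction event $\{N(s)\ge 1\text{ for all }s\in[0,t]\}$, the dominator at any time $s\in[0,t]$ coincides with the type of maximum fitness among \emph{all} types ever born by time $s$. This is because each death removes the currently lowest-fitness alive type, so the overall record-holder is never the lowest unless it is the only one alive; but then the next death would drive the population to zero, contradicting non--extinction. A straightforward induction on birth and death events makes this rigorous.

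Next, let $K(s)$ denote the total number of types ever born by time $s$ (including the initial $N(0)$ ones). The first step renders the event ``dominating types at $\delta t$ and $t$ are the same'' equivalent to ``the maximum of the $K(t)$ i.i.d.\ fitness values lies among the first $K(\delta t)$.'' Since the fitnesses are i.i.d.\ continuous and independent of the birth--death dynamics, conditioning on those dynamics and invoking exchangeability of the fitnesses gives
\[
P\bigl(\text{same dominator}\mid N(t)>0\bigr) \;=\; \Expectation{K(\delta t)/K(t)\mid N(t)>0},
\]
reducing the theorem to the asymptotics of this expectation.

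For $\lambda>1$, conditional on survival $N(u)\sim W e^{(\lambda-1)u}$ for some random $W>0$, whence $K(\delta t)/K(t)$ vanishes exponentially and the limit is $0$. For $\lambda<1$, the Yaglom quasi-stationary distribution makes $N(u)\mid N(t)>0$ essentially stationary in the bulk of $[0,t]$; then $K(s)$ grows linearly in $s$ and the ratio concentrates at $\delta$. The critical case $\lambda=1$ is the genuine obstacle: here $(N(\alpha t)/t)_{\alpha\in[0,1]}\mid N(t)>0$ has a Feller-diffusion scaling limit conditioned on not hitting $0$ before time $1$, and one must show that the expected ratio of cumulative integrals of this limit equals exactly $\delta$. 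I would approach this via the spine (immortal particle) decomposition of the conditioned critical process, which exhibits a persistent lineage whose contribution to $K(s)$ is asymptotically linear in $s$; bounding the random off-spine fluctuations to lower order should then yield the $\delta$ limit.
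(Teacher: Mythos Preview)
The paper does not actually prove this theorem; it is quoted verbatim from Liggett and Schinazi (2009) and serves purely as motivation. The only information the paper gives about the proof is the single remark that ``the key object in the proof of Thm.~\ref{thm1} are the random times of hitting state $1$, conditional on having started in state $2$,'' i.e.\ the variable $H$ whose tail is the subject of the rest of the paper. So there is no in-paper argument to compare against in detail---only this hint that the original proof is organised around the renewal structure of returns to state $1$.

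Your reduction to $\E{K(\delta t)/K(t)}$ is a genuinely different and perfectly reasonable starting point, and the identification of the dominator with the all-time fitness record is correct. Note, however, that in the model as formulated here the process is \emph{modified} so that no death can occur from state $1$; extinction is therefore impossible, and your repeated conditioning on $\{N(t)>0\}$, invocation of the Yaglom limit, and ``conditional on survival'' asymptotics are misplaced. Once this is corrected, the cases $\lambda\neq 1$ go through essentially as you say: for $\lambda<1$ the returns to $1$ have finite mean gap $\E{X+H}<\infty$, so a renewal LLN gives $K(t)/t\to c$ and hence the ratio limit $\delta$; for $\lambda>1$ exponential growth of $N$ forces $K(\delta t)/K(t)\to 0$.

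The substantive gap is the critical case. You propose a Feller--diffusion scaling together with a spine decomposition, but the process here is a birth--death chain reflected at $1$, not a Galton--Watson tree conditioned on survival, and the spine heuristic (``the immortal lineage contributes linearly to $K(s)$'') does not translate into a proof: both the excursion lengths and the number of births per excursion have infinite mean, so ``linear plus lower order'' is exactly what fails. The structure that does carry the argument is the renewal sequence of returns to $1$ with inter-arrival law $X+H$, whose tail $P(H>t)\sim t^{-1}$ is precisely the input to Dynkin--Lamperti--type results for infinite-mean renewals; that is why the paper singles out $H$ as the key object. Your sketch, as it stands, does not access this mechanism and would not produce the exact limit $\delta$.
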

Obviously, if $\lambda \le 1$, then a given (maximal) type can persist (like influenza) 
but when $\lambda>1$, then there will be frequent switching between dominating types
(like HIV---we cannot observe which strain dominates). 
The key object in the proof of Thm. \ref{thm1} are the random times of hitting state $1$, conditional on
having started in state $2$, we denote this random variable by $H$. Then, the waiting time for
returning from state $1$ to state $1$ can be represented as $X+H$, where $X\sim\exp(\lambda)$.
Naturally state $0$ is an absorbing state of the process. However, we do not worry
about this, as it is assumed that if there is only one type, then it cannot die \cite{TLigRSch2009},
i.e. $0$ will never be reached and the process starts anew in $1$.

Let $F(t)$ denote the cumulative distribution function (cdf) of $H$, i.e.
$F(t) = P(H\le t)$.
Here we will focus on the critical case $(\lambda=1)$. In \cite{TLigRSch2009} it was
claimed that in this regime $F(t)=t/(t+1)$ (proof of Lemma $3$ therein). However,
in \cite{KBarMKrz2014} it was shown that this statement is not true (cf. Eqs. $3.3$ and $3.4$ therein) 
however the right asymptotic behaviour of $F(t)$ is known
(cf Eq. $3.2$ therein), 
\be\label{eqFtAsympt}
\lim\limits_{t\to \infty} t(1-F(t)) = 1.
\ee

To be able to work with the law of $H$ we introduce the following definition.
\begin{definition}\label{dfPowerLaw}
We say that a random variable $Y$ follows a power--law probability distribution with parameter $\gamma> 0$, 
if it has support on $(y_{\min},\infty)$, $y_{\min}\ge 0$, and
its law asymptotically satisfies
$$
P(Y>y) \sim C y^{-\gamma },~~~~~ \mathrm{i.e.}~ \lim\limits_{y\to \infty} y^{\gamma }P(Y>y)  = C,
$$
for some constant $C$.
\end{definition}
Moments of order $m\ge \gamma$ are infinite. 

\begin{lemma}
$H$ is a positive random variable that follows a power law distribution with $C=1$, $\gamma=1$ and
$$\E{H}=\infty.$$
\end{lemma}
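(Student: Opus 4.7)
The proof is essentially a direct application of the asymptotic \eqref{eqFtAsympt} and the definition of a power--law distribution, so the plan is mostly bookkeeping with one short tail--integration argument at the end.

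First, I would observe that $H$ is positive almost surely. By construction $H$ is the time needed for the process, started in state $2$, to reach state $1$. Since the first jump of the process occurs after an exponentially distributed waiting time (the minimum of the birth and death clocks), we have $H > 0$ with probability one, and in particular $H$ is well defined as a positive random variable.

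Second, I would identify the parameters of the power law. Writing $P(H>t) = 1 - F(t)$ and appealing to the already established asymptotic \eqref{eqFtAsympt}, namely $\lim_{t\to\infty} t(1-F(t)) = 1$, gives
\[
\lim_{t\to\infty} t^{\gamma} P(H>t) = 1
\]
with $\gamma = 1$. By Definition \ref{dfPowerLaw} this is precisely the statement that $H$ follows a power--law distribution with $\gamma = 1$ and $C = 1$.

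Third, for $\E{H} = \infty$, I would use the standard tail formula for the expectation of a non--negative random variable,
\[
\E{H} = \int_0^\infty P(H>t)\, \ud t.
\]
From \eqref{eqFtAsympt}, for any $\varepsilon \in (0,1)$ there is a threshold $T$ such that $P(H>t) \ge (1-\varepsilon)/t$ for all $t > T$. Consequently
\[
\E{H} \ge \int_T^\infty \frac{1-\varepsilon}{t}\, \ud t = \infty,
\]
so the mean is infinite. (Equivalently, one could cite the remark following Definition \ref{dfPowerLaw}: all moments of order $m \ge \gamma = 1$ of a power--law random variable are infinite.)

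There is no real obstacle here: the hard analytic content is already encoded in \eqref{eqFtAsympt}, established in \cite{KBarMKrz2014}. The only point that requires any care is the passage from the asymptotic equivalence to the divergence of the tail integral, and this is handled by the one--line lower bound above.
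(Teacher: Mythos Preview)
Your proposal is correct and follows essentially the same approach as the paper: positivity by construction, identification of the power--law parameters directly from \eqref{eqFtAsympt}, and then the tail--integral formula for $\E{H}$. If anything, your third step is slightly more explicit than the paper's, which simply asserts that $\E{H^{m}}=\int_{0}^{\infty}P(H^{m}>y)\,\ud y=\infty$ for $m\ge 1$ without writing out the lower bound.
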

\begin{proof}
$H$ is positive by construction as a random sum of exponential random variables.
From Eq. \eqref{eqFtAsympt} we know that
$\lim\limits_{y\to \infty} yP(H>y) = 1$, corresponding to a power law distribution with 
$C=1$, $\gamma=1$ in Defn. \ref{dfPowerLaw}.
Hence, for $m\ge 1$ it holds directly that 
$$\E{H^{m}} = \int\limits_{0}^{\infty}P(H^{m}>y)\ud y= \infty.$$
\end{proof}

\begin{remark}
It is worth noting that 
the infinite mean can be derived directly from the model formulation also. Denote by $H_{j}$ the time to reach state $1$
from state $j$ and $h_{j}=\E{H_{j}}$. Then, $H=H_{2}$ and define $H_{1}=0$. Notice that by the memoryless property
of the process 
we can see that $h_{j}$ is a non--decreasing sequence, to get from state $j$ to state $1$ one
has to return to state $j-1$. 

By model construction we have $h_{2}=1/4+h_{3}/2$ and 
$$
h_{3} = \frac{1}{6} + \frac{1}{2}h_{2}+\frac{1}{2}h_{4},
$$
giving
$$
h_{2} = h_{3} +(h_{3}-h_{4}) - \frac{1}{3}.
$$
In the same way one will have for all $j \ge 3$
$$
h_{j}-h_{j+1} = h_{j+1}+(h_{j+1}-h_{j+2}) - \frac{1}{j+1}
$$
resulting in for every $N\ge 3$
$$
h_{2} = h_{3} +(h_{N}-h_{N+1}) -\sum\limits_{i=3}^{N} \frac{1}{i}.
$$
As obviously $h_{2}\ge 0$,
one needs for every $N \ge 3$
$$
h_{3} \ge h_{N+1}-h_{N} + \sum\limits_{i=3}^{N} \frac{1}{i} \ge \sum\limits_{i=3}^{N} \frac{1}{i}
$$
implying that $h_{3} = \infty$. As $h_{2}=1/4+h_{3}/2$, then immediately $h_{2}=\infty$.
\end{remark}

\section{Simulation algorithm}\label{secSimAlg}
Very often an important component of studying stochastic models is the possibility to simulate them.
This is in order to illustrate the model, gather intuition for its properties, back--up 
(i.e. check for errors) analytical results and to design Monte Carlo based estimation or testing
procedures. A Markov process as described in Section \ref{secIntro} is in principle trivial to simulate
using the Gillespie algorithm \cite{DGil1977}.
Let $N_{n}$ be the embedded Markov Chain, i.e. $N_{n}=N(t_{n})$, where $t_{n}$ is the time of the
$n$--th birth or death event.
If the process is in state $N_{n}=i$ at step $n$, then one draws an exponential with rate
$(i+i\lambda)$ waiting time and $N_{n+1}=i+\xi$, where $P(\xi=1)=\lambda i/(i+\lambda i)$ and $P(\xi=-1)=i/(i+\lambda i)$.
However, in the critical case $\lambda=1$, as we showed that $\E{H}=\infty$, we cannot expect to reliably
sample the full distribution of the chain's trajectory. The tails will be significantly undersampled. 
In the simplest case if we want to plot an estimate of $H$'s density (e.g. a histogram), then
we can expect its right tail to be significantly (in the colloquial sense) too light. 

Unfortunately, only the asymptotic behaviour of $H$'s survival function, $1-F(t)$, is known. Therefore,
if we were to draw from this form, we should firstly expect the initial part of the histogram to be 
badly found. 
Furthermore, as the constant $C=1$ in Eq. \ref{eqFtAsympt}, then we know that the cumulative distribution function
related to $F(t)$, $1-1/t$, has non--zero support 
on $[1,\infty)$. 
Obviously $H$ can take values also in $(0,1)$, so the above does not suffice.

Therefore, in this work we propose a hybrid algorithm that combines both approaches. 
Intuitively the left side of the histogram is simulated directly, while the right side is drawn
from the 
asymptotic. 
We have as our aim a proof of concept study---to see if reasonable results
can be obtained, leaving improvements of the algorithm and its analytical properties for 
further work.

Let $p$ be the proportion of simulations that we do not want to simulate directly but draw from the asymptotic.  
Define $T_{\min}$ as the value for which
$$p = P(H > T_{\min}) = 1-F(T_{\min}) \approx T_{\min}^{-1}.$$ This gives $T_{\min}=p^{-1}$.
We will use $p^{-1}$ and $T_{\min}$ interchangeably depending on the focus---the
tail probability or the threshold value.

\begin{algorithm}[!htp]
\caption{Simulating return time to state $1$}\label{algSimH}
\begin{algorithmic}[1]
\State Initialize $p$, $T_{\min}=p^{-1}$, $N=2$, $H=0$
\While {$H \le T_{\min}$ \algAnd $N\neq 1$}
    \State draw $u \sim \exp(2N)$
    \State $H=H+u$
    \If {
    $H \le T_{\min}$}
    \State $N=N+\xi$ \Comment{
    $P(\xi=1)=P(\xi=-1)=\frac{1}{2}$}
    \EndIf
\EndWhile
\If {$H > T_{\min}$ \algOr $N\neq 1$}
    \State $H=$\lstinline{poweRlaw::rplcon(1,xmin=}$T_{\min}$\lstinline{,alpha=2)} \Comment{See Section \ref{secSimAlg}}
\EndIf
\State draw $t\sim \exp(1)$
\State \Return{$(H,H+t)$} \Comment{$H+t$ is return time from $2$ to $1$}
\end{algorithmic}
\end{algorithm}

The hybrid simulation approach is described in Alg. \ref{algSimH}. Until the waiting time 
does not exceed a certain threshold the simulations proceeds directly according to the
model's description. The threshold simulation is chosen so that (approximately, according to 
the limit distribution) with probability $1-p$ it will not be exceeded. If the threshold 
is exceeded, then $H$ is drawn from a law corresponding to the survival function's asymptotic
behaviour. There is a trade--off in the choice of $p$. If we choose a small $p$, i.e. a threshold
far in the right tail, then with high probability we will have an exact simulation algorithm.
However, on the other hand the running time may be large. In contrast, taking a larger $p$ will
reduce the running time, however, more samples will not be drawn exactly but from the
asymptotic, and our final simulated value's distribution could be further
away from its true law. Obviously, $p\in(0,1)$ by it being a probability and the previously discussed
properties of the asymptotics of the survival function, i.e. $C=1$.

In order to draw from the law corresponding to the asymptotic behaviour
of the right tail of the survival function, we use the powerRlaw \cite{CGil2015} \texttt{R} \cite{R}
package. The \lstinline{poweRlaw::rplcon(1,}$T_{\min}$\lstinline{,}$\alpha$\lstinline{)} draws a single value
from a power law supported on $(T_{\min},\infty)$ with density 
and cumulative distribution functions (Eq. $(1)$, $(4)$, \cite{CGil2015})  equalling 
\be\label{eqpowerRlawDens}
p(t) = \frac{\alpha-1}{T_{\min}} \left(\frac{t}{T_{\min}} \right)^{-\alpha},~~~P(T\le t)=1-\left(\frac{t}{T_{\min}}\right)^{-\alpha+1}
\ee
for $\alpha>1$ and $T_{\min}>0$. Note the correspondance between Defn. \ref{dfPowerLaw} and Eq. \eqref{eqpowerRlawDens},
$\gamma=\alpha-1$.

In our situation we have $1-F(t)\sim t^{-1}$. Hence, we can take the power law corresponding
to the limit as the 
cdf $F_{\infty}(t)=(1-t^{-1})\mathbf{1}_{t>1}$,
with density equalling $f_{\infty}(t)=t^{-2}$ on $(1,\infty)$. This implies taking $\alpha=2$
when calling \lstinline{poweRlaw::rplcon()}. As we have defined a threshold of $p^{-1}$,
we need to include it also when drawing the value, i.e. we do not want to have empty draws of too small
values. Setting, then $T_{\min}=p^{-1}$, tells \lstinline{poweRlaw::rplcon()} that our law
is concentrated on $(T_{\min},\infty)$. Notice that we are working with the conditional 
random variable $H$, given that $H > T_{\min}$. Its conditional cdf 
will equal $p^{-1}(F_{\infty}(t)-F_{\infty}(T_{\min}))\mathbf{1}_{t>T_{\min}}$, with associated density 
$p^{-1}t^{-2}\mathbf{1}_{t>T_{\min}}$, but this equals $p(t)=(1/p)^{-1}(xp^{-1})^{-2}\mathbf{1}_{t>T_{\min}}$.
The function \lstinline{poweRlaw::rplcon()} draws a value using the inverse 
cdf method.
Let $U\sim \mathrm{Unif}[0,1]$, and then $T$ from \lstinline{poweRlaw::rplcon()} is given by
$$
T=T_{\min}(1-U)^{-1/(\alpha-1)}.
$$

As we do not know from what value the tail asymptotics are a good approximation we 
cannot a priori be sure whether the threshold will be exceeded with probability $p$, nor
if after $p^{-1}$ the sampling of $H$ from the limit will be accurate. These important
properties will be checked empirically in the simulation study in Section \ref{secSimulStudy}.

\section{Simulation setup and results}\label{secSimulStudy}
The simulation study presented here has a number of aims, to illustrate the distribution of waiting times $H$
as simulated exactly and via Alg. \ref{algSimH}. Secondly, to study whether Alg. \ref{algSimH} is a sensible
approach to to simulating this random time. 
All simulations were done in R version $3.4.2$ \cite{R} running on an openSUSE $42.3$ ($x86$\_$64$) box
with a $3.50$GHz Intel\textsuperscript{\textregistered} Xeon\textsuperscript{\textregistered} CPU.

Simulating the Markov process directly (and then extracting $H$)
is a straightforward procedure. However, as $\E{H}=\infty$, we introduce a cutoff, if the number
of steps exceeds a given number, here $10^{8}$, we end the simulation and mark that the process
did not return to state $1$. Out of the $10^{7}$ repeats, $767$ reached the maximum allowed number of steps, $10^{8}$.
In order to see how well Alg. \ref{algSimH} is corresponding to the 
true distribution of $H$, we re--run it for two values of $p$, $p_{1}=0.0001$ and $p_{2}=0.0005$.
Then, we compare the logarithms of the survival functions, of both simulations and furthermore
on the interval $(p_{2}^{-1}, p_{1}^{-1})$, Fig. \ref{figAlgECDFs}. 
The first sample, with threshold
$p_{1}^{-1}$, can be thought of as the ``true one'' on this interval, as $H$ was simulated exactly---directly from
the model's definition. We present the logarithm of the survival function as otherwise nothing
would be visible from the plot, due to the heavy tail. The power--law property of the tail
can be clearly seen in the left panel of Fig. \ref{figAlgECDFs}. In fact, if one regress $\log(y)$
on $\log(x)$ one will obtain a slope estimate of $-1.010$ $(p_{1})$ or $-0.9375$ $(p_{2})$.
This is in agreement with our result that $1-F(t) \sim t^{-1}$.
Out of the $10^{7}$ simulations only $992$ had $H>p_{1}^{-1}=10000$
in the case of the $p_{1}$ simulation and $4992$ instances 
had $H>p_{2}^{-1}=2000$ in the case of the $p_{2}$ simulation. We can see that these correspond nearly
exactly to the desired proportion of exceeding the cutoff, i.e. 
$992/10^{7}=0.0000992\approx p_{1}=0.0001$ and
$4992/10^{7}=0.0004992\approx p_{2}=0.0005$.

\section{Discussion}\label{secDisc}
The results presented here indicate that a hybrid approach for simulating the waiting
time to return to state $1$ in the considered birth--death model is a promising one. 
From Figs. \ref{figNumSteps} and \ref{figFinalStates} it is evident that the
direct approach cannot capture the heavy tail. When, the simulation
was stopped due to exceeding a time/step limit these figures show that the process
was usually far on the right and not approaching $1$. 

For our considered hybrid procedure to be effective one needs to appropriately
choose the threshold $p^{-1}$. If it is too small, then the return time might
not yet be in the asymptotic regime. If it is too large, then the running time 
can be too long. In our case $10^{7}$ repeats with $p_{2}=0.0005$ took 
about $68.5$ hours while for $p_{1}=0.0001$ it took about $15.6$ days. 
Definitely $p_{1}^{-1}$ is too large for a threshold, while $p_{2}^{-1}$ will
be acceptable given that the required sample size is smaller. 
Most importantly,
the comparison between $p_{1}$'s and $p_{2}$'s results, in Fig. 
\ref{figAlgECDFs}, showed that 
with $p_{2}=0.0005$ the hybrid approach yields samples that do not seem
to be distinguishable from the true law. This is as on the interval
$(p_{2}^{-1},p_{1}^{-1})$ the $p_{1}$ simulation is exact. 

Our study here points to three possible, exciting future directions of
development. Firstly, to identify an optimal value for $p$. Secondly,
to develop a better simulation approach so that when $H$ exceeds $p$
and the draw ``has to be made from the tail'', the simulated path does
not need to be discarded, i.e. to characterize the law of the return time 
from an arbitrary state $m$ to state $1$. And lastly the study of 
$1-F(t)-t^{-1}$ i.e. how does the survival function deviate from its 
asymptotic behaviour. 

\begin{figure}[!ht]
\centering
\includegraphics[width=0.31\textwidth]{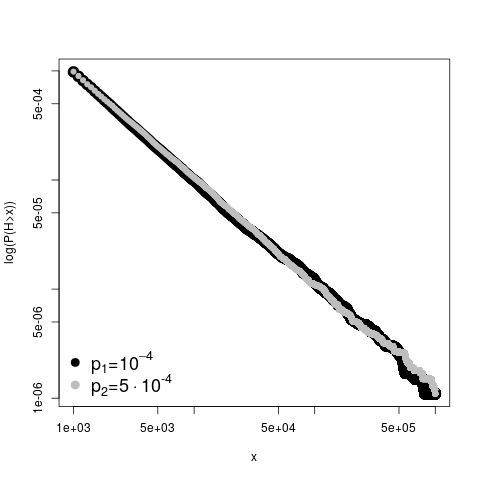}  
\includegraphics[width=0.31\textwidth]{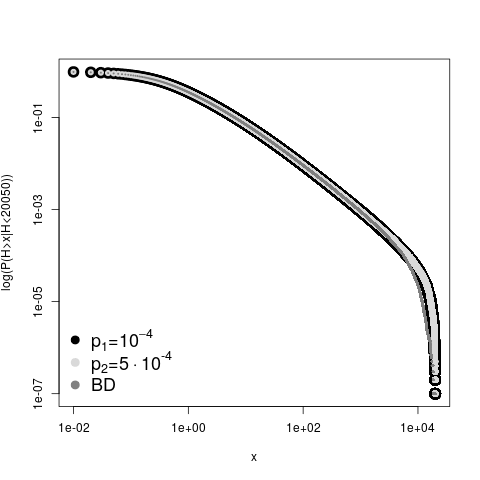}  
\includegraphics[width=0.31\textwidth]{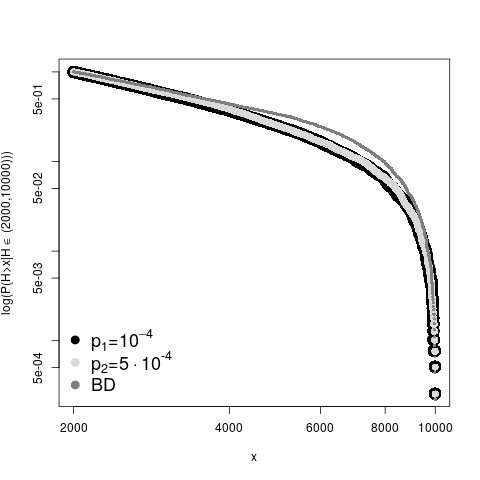}  
\caption{
Logarithms of survival functions created using \texttt{R}'s \lstinline{ecdf()}
function to estimate the cdf.
Left: comparison of the survival functions for $p_{1}$ and $p_{2}$,
centre: comparison of the survival functions for $p_{1}$, $p_{2}$ and 
the ``direct'' simulation (on the interval $(0,20050)$, where the ``direct'' simulation
produced values), right: comparison of the survival functions for $p_{1}$, $p_{2}$
and the ``direct'' simulation on the interval $(p_{2}^{-1}, p_{1}^{-1})$.
For readability the right plot starts from $1000$.
In the centre plot the ``direct'' simulation is conditional
on the number of steps being lesser than $10^{8}$, so this
survival function is a conditional one.
The Kolmogorov--Smirnov test, \lstinline{ks.test()} function, was 
used to check for differences between the empirical
distributions. All p--values exceeded $0.277$, except
for the (right plot) comparison on the interval $(p_{2}^{-1}, p_{1}^{-1})$,
between ``direct'' and $p_{1}$ simulation $(5.551\cdot 10^{-16})$
and ``direct'' and $p_{2}$ simulation $(2.331 \cdot 10^{-15})$.
Here, the p--value when testing the $p_{1}$ versus the $p_{2}$ simulation
equalled $0.277$.
The $x$--axes are on the log scale.
}\label{figAlgECDFs}
\end{figure}

\begin{figure}[!ht]
\centering
\includegraphics[width=0.35\textwidth]{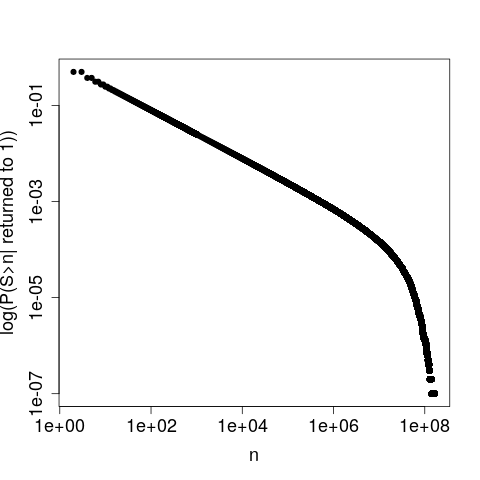}  
\includegraphics[width=0.35\textwidth]{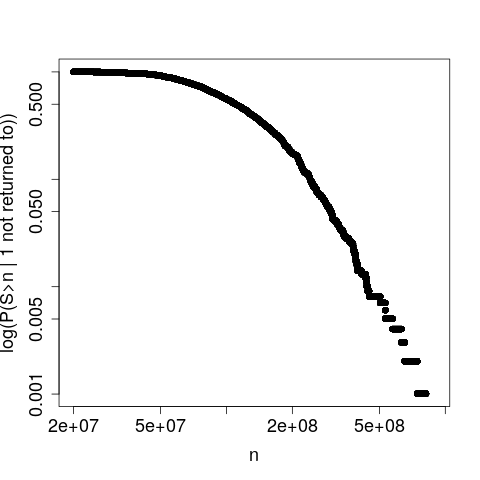}  
\caption{Comparison of the number of steps in the 
$p_{1}$ simulation, using the logarithm of the survival function. 
Left: when state $1$ was returned to, right:
state $1$ was not returned to, i.e. $H>p_{1}^{-1}=10000$. 
}\label{figNumSteps}
\end{figure}

\clearpage
\begin{figure}[!ht]
\centering
\includegraphics[width=0.3\textwidth]{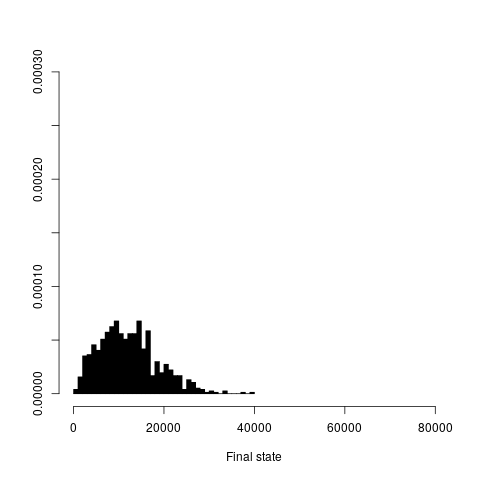}
\includegraphics[width=0.3\textwidth]{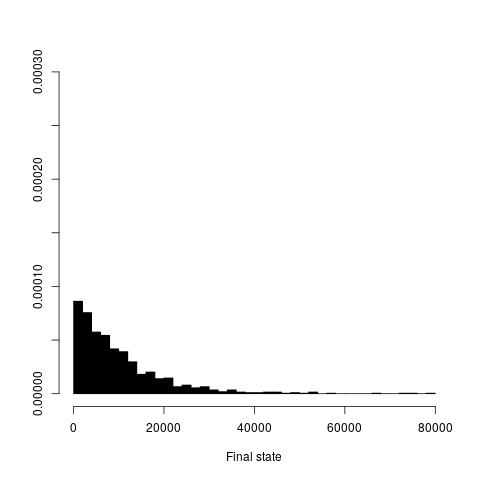}  
\includegraphics[width=0.3\textwidth]{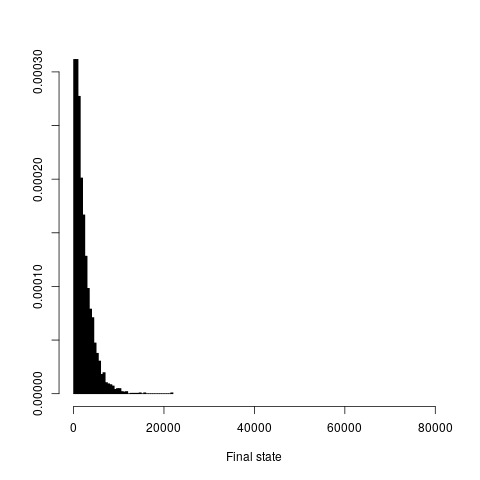}  
\caption{The final state reached when state $1$ was not returned to in the direct
simulation. Left: ``direct'' simulation, centre: $p_{1}$ simulation, right: $p_{2}$ simulation.
In the $p_{1}$ and $p_{2}$ simulations case, in these instances $H$ was obtained by sampling from
\lstinline{poweRlaw::rplcon()}. 
}\label{figFinalStates}
\end{figure}

\section{Acknowledgments}
KB's research is supported by  the Swedish Research Council's (Vetenskapsr\aa det) grant no. $2017$--$04951$.
KB would like to thank the Editor and an anonymous Reviewer for careful reading of the manuscript and
comments that greatly improved it.
KB is grateful to Serik Sagitov for encouragement and suggestions to study general passage times between $m$ and
$k$ in the considered here birth--death process.

\newpage
\textbf{Appendix: R code for simulating $H$ used in Section \ref{secSimulStudy}}
\\~\\
{\small
\lstinputlisting{Bartoszek_H_simexact.R}
\lstinputlisting{Bartoszek_H_simplcon.R}
}

\end{document}